\newcommand{\lyxdot}{.}
\numberwithin{table}{section}
\numberwithin{figure}{section}
\theoremstyle{plain}
\newtheorem{thm}{\protect\theoremname}[section]
\theoremstyle{definition}
\newtheorem{defn}[thm]{\protect\definitionname}
\theoremstyle{remark}
\newtheorem{rem}[thm]{\protect\remarkname}
\newenvironment{proof}[1][\protect\proofname]{\par
\normalfont\topsep6\p@\@plus6\p@\relax
\trivlist
\itemindent\parindent
\item[\hskip\labelsep
\scshape
#1]\ignorespaces
}{%
\endtrivlist\@endpefalse
}
\providecommand{\proofname}{Proof}
\theoremstyle{plain}
\newtheorem{prop}[thm]{\protect\propositionname}
\newcommand{\q}{\quad}
\newcommand{\qq}{\qquad}
\newcommand{\mcl}[1]{\mathcal{#1}}
\newcommand{\mbd}[1]{\mathbf{#1}}
\newcommand{\mbb}[1]{\mathbb{#1}}
\newcommand{\pr}{\hbox{\sf P}}
\newcommand{\ep}{\hbox{\sf E}}
\newcommand{\var}{\hbox{\sf Var}\,}
\newcommand{\tto}{\rightarrow}
\newcommand{\id}{\hbox{\bf 1}}
\newcommand{\br}[1]{\left(#1\right)}
\newcommand{\dt}{\mathrm{d}}
\newcommand{\ps}{\br{\Omega,\mcl{F},\pr}}
\numberwithin{equation}{section}
\providecommand{\definitionname}{Definition}
\providecommand{\propositionname}{Proposition}
\providecommand{\remarkname}{Remark}
\providecommand{\theoremname}{Theorem}
\begin{document}
%\pagewiselinenumbers

\title{\textbf{\Large Mean-Variance Asset-Liability Management with State-Dependent
Risk Aversion}}

\author{Qian Zhao \thanks{School of Finance and Statistics, East China Normal University, Shanghai, 200241, China; Department of Applied Finance and Actuarial Studies, Faculty of Business and Economics, Macquarie University, Sydney, NSW 2109, Australia. E-mail: qzhao31@gmail.com},
\q Jiaqin Wei\thanks{Corresponding author. Department of Applied Finance and Actuarial Studies, Faculty of Business and Economics, Macquarie University, Sydney, NSW 2109, Australia. E-mail: jiaqinwei@gmail.com},
\q Rongming Wang\thanks{School of Finance and Statistics, and Research Center of International Finance and Risk Management, East China Normal University, Shanghai, 200241, China. E-mail: rmwang@stat.ecnu.edu.cn}}

\date{\normalsize\it\today}
\maketitle
\begin{abstract}
In this paper, we consider the asset-liability management under the
mean-variance criterion. The financial market consists of a risk-free
bond and a stock whose price process is modeled by a geometric Brownian
motion. The liability of the investor is uncontrollable and is modeled
by another geometric Brownian motion. We consider a specific state-dependent
risk aversion which depends on a power function of the liability.
By solving a flow of FBSDEs with bivariate state process, we obtain
the equilibrium strategy among all the open-loop controls for this
time-inconsistent control problem. It shows that the equilibrium strategy
is a feedback control of the liability. 

\textit{Keywords:} Asset-liability management; Mean-variance; Equilibrium
strategy; Time-inconsistent control problem; FBSDEs
\end{abstract}

\section{Introduction }

In the pioneer work \citet{m52}, the author considered the portfolio
selection under the well-known mean-variance criterion and derived
the analytical expression of the mean-variance efficient frontier
in the single-period model. This seminal work has become the foundation
of modern portfolio theory and has stimulated numerous extensions. 

On the one hand, some researchers focus on studying the dynamic mean-variance
portfolio selection problem. \citet{s69} considered a discrete-time
multi-period model. More recently, by embedding the original problem
into a stochastic linear-quadratic (LQ) control problem, \citet{ln00}
and \citet{zl00} extended Markowitz's work to a multi-period model
and a continuous-time model, respectively. On the other hand, there
are some works that consider a generalized financial market. An important
and popular subject is the asset and liability management problem,
which studies the selection of portfolio while taking into account
the liabilities of investors. More specifically, in the asset and
liability management, the surplus, i.e. the difference between asset
value and liability value, is considered.

Since it was proposed by \citet{st90} which considered a single-period
model, there is an increasing number of interests in the asset-liability
management under the mean-variance criteria. \citet{km95} studied
the portfolio choice with liabilities and showed that liabilities
affect the efficient frontier. Adopting the embedding technique of
\citet{ln00}, \citet{ltv04} derived an analytical optimal policy
and efficient frontier for the multi-period asset-liability management
problem. The mean-variance asset-liability management in a continuous-time
model was investigated by \citet{cl06} in which a stochastic LQ control
problem was studied and both the optimal strategy and the mean efficient
frontier were obtained. Furthermore, in a regime-switching framework,
\citet{cyy08} and \citet{cy11} studied the mean-variance asset-liability
management in the continuous-time model and mule-period model, respectively.
It is worth to note that, all of these papers suggested that the liabilities
were not controllable, which is the main difference between the Markowitz's
problem and the asset-liability management.

It is well acknowledged that due to the existence of a non-linear
function of the expectation in the objective functional, the mean-variance
portfolio selection problem in a multi-period framework is time inconsistent
in the sense that the Bellman optimality principle does not hold.
Intuitively, an optimal strategy obtained for the initial time may
not be optimal for any latter time. This is the so-called pre-committed
strategy, i.e., the strategy that is only optimal for the initial
time. Note that in all the references we mentioned above (among others),
only the pre-committed strategies have been considered.

In \citet{s55}, the author proposed another approach to study the
time inconsistent problem, i.e., study the problem within a game theoretic
framework by using Nash equilibrium points. Recently, there is an
increasing amount of attention in the time inconsistent control problem
due to the practical applications in the economics and finance. In
\citet{el06} and \citet{ep08} which considered the optimal consumption
and investment problem under hyperbolic discounting, the authors provided
the precise definition of the equilibrium concept in continuous time
for the first time. Following their idea, \citet{bm10} studied the
time-inconsistent control problem in a general Markov framework, and
derived the extended HJB equation together with the verification theorem.
\citet{bmz12} studied the Markowitz's problem with state-dependent
risk aversion by utilizing the extended HJB equation obtained in \citet{bm10}.
They showed that the equilibrium control was dependent on the current
state. Considering a regime-switching model and with the assumption
that the risk aversion depends on the state of the regime, \citet{wwyy12}
investigated the equilibrium strategy for the mean-variance asset-liability
management problem by using the extended HJB equation developed by
\citet{bm10}.

In \citet{el06}, \citet{ep08} and the papers following their idea,
the equilibrium control was defined within the class of feedback controls.
Considering the time-inconsistent stochastic LQ control, \citet{hjz12}
defined the equilibrium control within the class of open-loop controls,
and derived a general sufficient condition for equilibriums through
a flow of forward-backward stochastic differential equations (FBSDEs).
However, the general existence of solutions to the flow of FBSDEs
is an open problem. With the assumption that the state process was
scalar valued and all the coefficients were deterministic, \citet{hjz12}
showed that the flow of FBSDEs could be reduced into several Riccati-like
ordinary differential equations and the equilibrium control could
be obtained explicitly. Also considering the scalar valued state process,
\citet{hjz12} dealt with the Markowitz's problem with state-dependent
risk aversion and stochastic coefficients. Due to the difference between
the definitions of equilibrium controls, their results were rather
different from those obtained in \citet{bm10} and \citet{bmz12}.

Following the idea of \citet{hjz12}, we consider the time-inconsistent
mean-variance asset-liability management. Since the state process
of our problem is bivariate, the solution to the flow of FBSDEs in
\citet{hjz12} can not be directly adopted. We show that the flow
of FBSDEs of our problem can be solved explicitly and the (close-form)
equilibrium strategy can be obtained. There are some differences between
this paper and \citet{wwyy12} which also studied the time-inconsistent
mean-variance asset-liability management. First, the definitions of
equilibrium controls are different. They are inherited from the differences
between \citet{hjz12} and \citet{ep08}. Second, the risk aversion
considered in this paper depends on the liability process (see Remark
\ref{rmk:2-1}), while the risk aversion in \citet{wwyy12} only depends
on the state of regime and it becomes constant when there is only
one regime. Since the risk aversion is independent of the surplus
process, the equilibrium strategy in this paper is a feedback control
of the liability process which is similar to \citet{wwyy12}. Although
we use different definitions of the equilibrium strategy from \citet{wwyy12},
in a special case we get the same result with \citet{wwyy12} (see
Remark \ref{rmk:3-2}).

The remainder of this paper is organized as follows. Section 2 introduces
the model, the definition of the equilibrium strategy and the flow
of FBSDEs of our problem. In section 3 we derive the solution to the
flow of FBSDEs and the equilibrium strategy. Section 4 establishes
the equilibrium value function. Some numerical examples are illustrated
in section 5.

\section{Preliminaries}

\subsection{The model}

Let $\ps$ be a fixed complete probability space on which two independent
standard Brownian motions $W_{1}(t)$ and $W_{2}(t)$ are defined.
Let $T>0$ be the fixed and finite time horizon and denote by $\left\{ \mcl F_{t}\right\} _{t\in[0,T]}$
the augmented filtration generated by $(W_{1}(t),W_{2}(t))$. 

We introduce the following notation with $n$ being a generic integer:
\begin{align*}
L_{\mcl G}^{2}(\Omega;\mbb R^{n}): & \q\text{the set of random variables }\xi:(\Omega,\mcl G)\tto(\mbb R^{n},\mcl B(\mbb R^{n}))\text{ with \ep\ensuremath{\left[\left|\xi\right|^{2}\right]}<+\ensuremath{\infty}.}\\
L_{\mcl G}^{2}(t,T;\mbb R^{n}): & \q\text{the set of }\{\mcl G\}_{s\in[t,T]}\text{-adapted processes }\{f(s)\}_{s\in[t,T]}\\
 & \q\text{ with }\ep\left[\int_{t}^{T}\left|f(s)\right|^{2}\dt s\right]<\infty.\\
L_{\mcl G}^{2}(\Omega;C(t,T,\mbb R^{n})): & \q\text{the set of continuous }\{\mcl G\}_{s\in[t,T]}\text{-adapted processes }\{f(s)\}_{s\in[t,T]}\\
 & \q\text{ with }\ep\left[\sup_{s\in[t,T]}\left|f(s)\right|^{2}\right]<\infty.
\end{align*}

In what follows, unless otherwise specified, we adopt bold-face letters
to denote matrices and vectors, and the transpose of a matrix or vector
$\mbd M$ is denoted by $\mbd M'$. Also, we denote by $M_{ij}$ (or
$M_{i}$) the $(i,j)$-element (or the $i$-th element) of the matrix
$\mbd M$ (or the vector $\mbd M$). 

We consider a financial market consisting of one bond and one stock
within the time horizon $[0,T]$. The price of the risk-free bond
$B(t)$ satisfies 
\[
\dt B(t)=r(t)B(t)\dt t,\q B(0)=1,\q0\leq t\leq T.
\]
The price of the stock $P(t)$ is given by 
\[
\dt P(t)=P(t)\left[\mu(t)\dt t+\sigma(t)\dt W_{1}(t)\right],\q0\leq t\leq T,
\]
where $P(0)=p_{0}>0.$ 

Denote by $L(t)$ the liability of the investor. We assume that the
liability and the stock price are correlated and the dynamics of liability
is given by 
\[
\dt L(t)=L(t)\left[\alpha(t)\dt t+\rho(t)\beta(t)\dt W_{1}(t)+\sqrt{1-\rho^{2}(t)}\beta(t)\dt W_{2}(t)\right],\q0\leq t\leq T,
\]
where $L(0)=l_{0}>0$ and $\rho(t)\in[0,1]$ for all $t\in[0,T]$.

Let $u(t)$ be the dollar amount invested in the stock at time $t$.
Then the asset in the stock market $Z(t)$ evolves as
\[
\dt Z(t)=\left[r(t)Z(t)+(\mu(t)-r(t))u(t)\right]\dt t+\sigma(t)u(t)\dt W_{1}(t),\q0\leq t\leq T,
\]
where $Z(0)=z_{0}.$ The surplus process for the asset-liability management
is given by $S(t):=Z(t)-L(t).$ Then the dynamics of $S(t)$ is 
\begin{eqnarray*}
\dt S(t) & = & \left[r(t)S(t)+\eta(t)L(t)+\theta(t)u(t)\right]\dt t+\left[\sigma(t)u(t)-\rho(t)\beta(t)L(t)\right]\dt W_{1}(t)\\
 &  & -\sqrt{1-\rho^{2}(t)}\beta(t)L(t)\dt W_{2}(t),\qq0\leq t\leq T,
\end{eqnarray*}
where $\eta(t)=r(t)-\alpha(t)$, $\theta(t)=\mu(t)-r(t)$ and $S(0)=z_{0}-l_{0}:=s_{0}$. 

Let $\mbd X(t)=(S(t),L(t))'$ be the bivariate state process and $\mbd X(0)=\mbd x_{0}:=(s_{0},l_{0})'$.
Thus we have
\begin{equation}
\dt\mbd X(t)=\left[\mbd A(t)\mbd X(t)+\mbd B'(t)u(t)\right]\dt t+\left[\mbd C_{1}(t)\mbd X(t)+\mbd D(t)u(t)\right]\dt W_{1}(t)+\mbd C_{2}(t)\mbd X(t)\dt W_{2}(t),\label{eq:2-1-0}
\end{equation}
where
\[
\mbd A(t)=\begin{pmatrix}r(t) & \eta(t)\\
0 & \alpha(t)
\end{pmatrix},\q\mbd B(t)=(\theta(t),0),\q\mbd C_{1}(t)=\begin{pmatrix}0 & -\rho(t)\beta(t)\\
0 & \rho(t)\beta(t)
\end{pmatrix},\q\mbd C_{2}(t)=\begin{pmatrix}0 & -\sqrt{1-\rho^{2}(t)}\beta(t)\\
0 & \sqrt{1-\rho^{2}(t)}\beta(t)
\end{pmatrix},
\]
and $\mbd D(t)=(\sigma(t),0)'$. We assume that $\mbd A$ $\mbd B,\mbd C_{1},\mbd C_{2}$
and $\mbd D$ are bounded deterministic functions on $[0,T]$ valued
in $\mbb R^{2\times2}$, $\mbb R^{1\times2},\mbb R^{2\times2},\mbb R^{2\times2}$
and $\mbb R^{2\times1}$, respectively.
\begin{defn}
A strategy $u$ is said to be admissible if $u\in L_{\mcl F}^{2}(0,T;\mbb R)$
such that SDE (\ref{eq:2-1-0}) has a unique solution $\mbd X\in L_{\mcl F}^{2}(\Omega;C(0,T,\mbb R^{2}))$.
\end{defn}
For the time-inconsistent control problem, we will consider the controlled
state process starting from time $t\in[0,T]$ and state $\mbd x_{t}\in L_{\mcl F}^{2}(\Omega,\mbb R^{2})$:
\begin{equation}
\dt\mbd X(s)=\left[\mbd A(s)\mbd X(s)+\mbd B'(s)u(s)\right]\dt s+\left[\mbd C_{1}(s)\mbd X(s)+\mbd D(s)u(s)\right]\dt W_{1}(s)+\mbd C_{2}(s)\mbd X(s)\dt W_{2}(s),\label{eq:2-1-1}
\end{equation}
with $\mbd X(t)=\mbd x_{t}.$ Note that for any strategy $u\in L_{\mcl F}^{2}(t,T;\mbb R)$,
SDE (\ref{eq:2-1-1}) admits a unique solution $\mbd X^{t,\mbd x_{t},u}\in L_{\mcl F}^{2}(\Omega;C(t,T,\mbb R^{2}))$.

At any initial state $(t,\mbd x_{t})$, the mean-variance cost functional
is given by
\begin{eqnarray}
J(t,\mbd x_{t};u) & := & \frac{1}{2}\var_{t}\left[S(T)\right]-\left[\omega_{1}L^{-\lambda}(t)+\omega_{2}\right]\ep_{t}\left[S(T)\right]\nonumber \\
 & = & \frac{1}{2}\ep_{t}\left[S^{2}(T)\right]-\frac{1}{2}\left(\ep_{t}\left[S(T)\right]\right)^{2}-\left[\omega_{1}L^{-\lambda}(t)+\omega_{2}\right]\ep_{t}\left[S(T)\right],\label{eq:2-1-2}
\end{eqnarray}
 where $u\in L_{\mcl F}^{2}(t,T;\mbb R)$, $(S,L)'=\mbd X^{t,\mbd x_{t},u}$,
$\omega_{1}$ , $\omega_{2}$ , $\lambda$ are nonnegative constants,
and $\ep_{t}[\cdot]:=\ep[\cdot\mid\mcl F_{t}]$. 
\begin{rem}
\textcolor{black}{\label{rmk:2-1}Note that $\frac{1}{\omega_{1}L^{-\lambda}(t)+\omega_{2}}$
is a state-dependent risk aversion of the investor. Taking $\omega_{1}\geq0$
and $\lambda\geq0$ implies that the risk aversion increases with
increasing liability which is reasonable for a common investor. Noting
that, with such a risk aversion, the investor is uniformly risk averse.}
\end{rem}

\subsection{The equilibrium strategy}

In this subsection, we introduce the equilibrium strategy to the time-inconsistent
control problem. We use the definition of the equilibrium strategy
from \citet{hjz12}.
\begin{defn}
Let $u^{*}\in L_{\mcl F}^{2}(0,T;\mbb R)$ be a given strategy and
$\mbd X^{*}$ be the state process corresponding to $u^{*}$. The
strategy $u^{*}$ is called an equilibrium strategy if for any $t\in[0,T)$
and $v\in L_{\mcl F_{t}}^{2}(\Omega,\mbb R)$, 
\[
\liminf_{\epsilon\tto0}\frac{J(t,\mbd X^{*}(t);u^{t,\epsilon,v})-J(t,\mbd X^{*}(t);u^{*})}{\epsilon}\geq0,
\]
where
\[
u^{t,\epsilon,v}(s):=u^{*}(s)+v\id_{s\in[t,t+\epsilon]},\q s\in[t,T],
\]
for any $t\in[0,T)$ and $\epsilon>0$. The equilibrium value function
is defined by
\begin{equation}
V(t,\mbd X^{*}(t)):=J(t,\mbd X^{*}(t);u^{*}).\label{eq:2-2-0-1}
\end{equation}

\end{defn}
Although we have stated the difference between definitions of equilibrium
strategy in \citet{hjz12} and \citet{el06} in previous section,
they have similar intuition. We refer the reader to these papers for
more details.

Let $u^{*}$ be a fixed strategy and $\mbd X^{*}$ be the corresponding
state process. For any $t\in[0,T),$ the adjoint process $(\mbd p(\cdot;t),(\mbd k_{1}(\cdot;t),\mbd k_{2}(\cdot;t)))\in L_{\mcl F}^{2}(t,T;\mbb R^{2})\times\left(L_{\mcl F}^{2}(t,T;\mbb R^{2})\right)^{2}$
is defined in the time interval $[t,T]$ by 
\begin{equation}
\begin{cases}
\dt\mbd p(s;t) & =-\left[\mbd A'(s)\mbd p(s;t)+\mbd C_{1}'(s)\mbd k_{1}(s;t)+\mbd C_{2}'(s)\mbd k_{2}(s;t)\right]\dt s+\sum_{i=1}^{2}\mbd k_{i}(s;t)\dt W_{i}(s),\q s\in[t,T],\\
\mbd p(T;t) & =\mbd G\mbd X^{*}(T)-\mbd h\ep_{t}\left[\mbd X^{*}(T)\right]-\left[\omega_{1}L^{-\lambda}(t)+\omega_{2}\right]\mbd e,
\end{cases}\label{eq:2-2-1}
\end{equation}
where
\[
\mbd G=\mbd h=\begin{pmatrix}1 & 0\\
0 & 0
\end{pmatrix},\q\mbd e=\begin{pmatrix}1\\
0
\end{pmatrix}.
\]

Note that the risk aversion in our model is different from \citet{hjz12}
in which the reciprocal of the risk aversion is a linear function
of the state process. However, with $\mbd p(s;t)$ defined by (\ref{eq:2-2-1}),
Proposition 3.1 in \citet{hjz12} still holds for our model. Hence,
we have the following result which gives a sufficient condition of
equilibrium strategies for our asset-liability management problem.
\begin{thm}
\label{thm:2-1}A strategy $u^{*}\in L_{\mcl F}^{2}(0,T;\mbb R)$
is an equilibrium strategy if for any time $t\in[0,T]$:
\begin{enumerate}[label=(\roman{enumi})]
\item  the system of stochastic differential equations
\begin{equation}
\begin{cases}
\dt\mbd X^{*}(s) & =\left[\mbd A(s)\mbd X^{*}(s)+\mbd B'(s)u^{*}(s)\right]\dt s+\left[\mbd C_{1}(s)\mbd X^{*}(s)+\mbd D(s)u^{*}(s)\right]\dt W_{1}(s)+\mbd C_{2}(s)\mbd X^{*}(s)\dt W_{2}(s),\\
\mbd X^{*}(0) & =\left(s_{0},l_{0}\right)';\\
\dt\mbd p(s;t) & =-\left[\mbd A'(s)\mbd p(s;t)+\mbd C_{1}'(s)\mbd k_{1}(s;t)+\mbd C_{2}'(s)\mbd k_{2}(s;t)\right]\dt s+\sum_{i=1}^{2}\mbd k_{i}(s;t)\dt W_{i}(s),\q s\in[t,T],\\
\mbd p(T;t) & =\mbd G\mbd X^{*}(T)-\mbd h\ep_{t}\left[\mbd X^{*}(T)\right]-\left[\omega_{1}L^{-\lambda}(t)+\omega_{2}\right]\mbd e;
\end{cases}\label{eq:2-3-4}
\end{equation}
admits a solution $\left(\mbd X^{*},\mbd p,(\mbd k_{1},\mbd k_{2})\right)$;
\item $\Lambda(s;t)=\mbd B(s)\mbd p(s;t)+\mbd D'(s)\mbd k_{1}(s;t)$ satisfies
\begin{equation}
\ep_{t}\left[\int_{t}^{T}\left|\Lambda(s;t)\right|\dt s\right]<\infty,\q\lim_{s\downarrow t}\ep_{t}\left[\Lambda(s;t)\right]=0,\; a.s.,\;\forall t\in[0,T].\label{eq:2-3-5}
\end{equation}

\end{enumerate}
\end{thm}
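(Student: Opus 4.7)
The approach mirrors the spike-variation scheme behind Proposition~3.1 of \citet{hjz12}, adapted to the bivariate state $\mbd X=(S,L)'$. The only structural novelty is that the risk-aversion prefactor $\omega_{1}L^{-\lambda}(t)+\omega_{2}$ is $\mcl F_{t}$-measurable and therefore frozen throughout any variation performed on $[t,t+\epsilon]$; absorbing it into the terminal datum $\mbd p(T;t)$ of the adjoint BSDE preserves the HJZ calculation essentially verbatim, which is the ``still holds'' remark preceding the statement.

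Fix $t\in[0,T)$ and $v\in L^{2}_{\mcl F_{t}}(\Omega,\mbb R)$, and let $\mbd X^{t,\epsilon,v}$ be the state driven by $u^{t,\epsilon,v}$ with $\mbd X^{t,\epsilon,v}(t)=\mbd X^{*}(t)$. Set $\mbd Y^{\epsilon}=\mbd X^{t,\epsilon,v}-\mbd X^{*}$. Since $\mbd B$ and $\mbd D$ act only on the surplus coordinate while the liability dynamics do not depend on $u$, one has $Y^{\epsilon}_{2}\equiv 0$ on $[t,T]$, and $Y^{\epsilon}_{1}$ solves a scalar linear SDE forced only on $[t,t+\epsilon]$ by drift $\theta(s)v$ and diffusion $\sigma(s)v$ and evolving deterministically by $\dt Y^{\epsilon}_{1}=r(s)Y^{\epsilon}_{1}\dt s$ thereafter. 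A direct computation on this reduced SDE gives $\ep_{t}[Y^{\epsilon}_{1}(T)^{2}]=O(\epsilon)$ (from the diffusion spike) and $\ep_{t}[Y^{\epsilon}_{1}(T)]=O(\epsilon)$ (pure drift), so $(\ep_{t}[Y^{\epsilon}_{1}(T)])^{2}=O(\epsilon^{2})$.

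Expanding each quadratic in (\ref{eq:2-1-2}) by polarization and using $\mbd G=\mbd h=\mathrm{diag}(1,0)$ together with $\mbd e=(1,0)'$ yields
\begin{align*}
J(t,\mbd X^{*}(t);u^{t,\epsilon,v})-J(t,\mbd X^{*}(t);u^{*})
&=\ep_{t}\bigl[\inp{\mbd p(T;t),\mbd Y^{\epsilon}(T)}\bigr]\\
&\quad+\tfrac{1}{2}\ep_{t}\bigl[\inp{\mbd G\mbd Y^{\epsilon}(T),\mbd Y^{\epsilon}(T)}\bigr]-\tfrac{1}{2}\bigl(\ep_{t}[Y^{\epsilon}_{1}(T)]\bigr)^{2},
\end{align*}
where the linear coefficient is exactly the terminal datum $\mbd p(T;t)$ prescribed in (\ref{eq:2-3-4}). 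Applying It\^o's formula to $s\mapsto\inp{\mbd p(s;t),\mbd Y^{\epsilon}(s)}$ on $[t,T]$, the drift contributions involving $\mbd A,\mbd C_{1},\mbd C_{2}$ in the product rule cancel term by term against the drift of the adjoint BSDE, the martingale parts have zero conditional expectation by the stated integrability, and only the spike contribution survives, giving the duality
$$\ep_{t}\bigl[\inp{\mbd p(T;t),\mbd Y^{\epsilon}(T)}\bigr]=v\cdot\ep_{t}\!\left[\int_{t}^{t+\epsilon}\Lambda(s;t)\,\dt s\right].$$

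Dividing by $\epsilon$, condition (ii), i.e.\ (\ref{eq:2-3-5}), combined with Lebesgue differentiation forces $\epsilon^{-1}\ep_{t}[\int_{t}^{t+\epsilon}\Lambda(s;t)\,\dt s]\to 0$, so the linear part contributes zero to the $\liminf$. The subtracted mean-squared term is $O(\epsilon^{2})=o(\epsilon)$ and vanishes. The remaining $\mbd G$-quadratic term is of order $\epsilon$, but $\mbd G\succeq 0$ makes it non-negative, so its $\liminf$ contribution is also non-negative; summing the three $\liminf$'s yields $\liminf_{\epsilon\downarrow 0}\epsilon^{-1}\bigl[J(t,\mbd X^{*}(t);u^{t,\epsilon,v})-J(t,\mbd X^{*}(t);u^{*})\bigr]\ge 0$, which is exactly the equilibrium condition. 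The principal delicacy — and the reason \citet{hjz12} formulate the open-loop equilibrium notion via $\liminf$ rather than an equality — is precisely that $\tfrac{1}{2}\ep_{t}[\inp{\mbd G\mbd Y^{\epsilon}(T),\mbd Y^{\epsilon}(T)}]$ is only $O(\epsilon)$ and not $o(\epsilon)$; the positive semi-definiteness of $\mbd G$, together with the one-order drop of $(\ep_{t}[Y^{\epsilon}_{1}(T)])^{2}$ to $O(\epsilon^{2})$, is what rescues the inequality.
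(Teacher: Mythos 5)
Your argument is correct and is precisely the spike-variation/duality argument the paper relies on: the paper writes no proof of Theorem \ref{thm:2-1} at all, simply asserting that Proposition 3.1 of \citet{hjz12} carries over because the $\mcl F_{t}$-measurable factor $\omega_{1}L^{-\lambda}(t)+\omega_{2}$ can be absorbed into the terminal datum of the adjoint equation. Your reconstruction --- the reduction $Y^{\epsilon}_{2}\equiv 0$, the second-order expansion of $J$ with linear coefficient $\mbd p(T;t)$, the It\^o duality yielding $v\,\ep_{t}\bigl[\int_{t}^{t+\epsilon}\Lambda(s;t)\,\dt s\bigr]$, and the sign analysis of the nonnegative $O(\epsilon)$ quadratic remainder versus the $O(\epsilon^{2})$ mean-squared term --- matches that cited proof and supplies the details the paper omits.
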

As mentioned by \citet{hjz12}, under some condition, the second equality
in (\ref{eq:2-3-5}) is ensured by 
\begin{equation}
\mbd B(t)\mbd p(t;t)+\mbd D'(t)\mbd k_{1}(t,t)=0.\label{eq:2-3-6}
\end{equation}

From the above theorem, if we can solve the flow of FBSDEs (\ref{eq:2-3-4}),
then we can get the equilibrium strategy. However, the general result
for the solution to a flow of FBSDEs is not available. In the next
section, we will solve the flow of FBSDEs (\ref{eq:2-3-4}) with bivariate
state process.

\section{The Equilibrium Strategy}

\subsection{The solution to the flow of FBSDEs (\ref{eq:2-3-4})}

Let $\mbd p=(p_{1},p_{2})',\mbd k_{1}=(k_{1,1},k_{1,2})'$ and $\mbd k_{2}=(k_{2,1},k_{2,2})'$.
We rewrite (\ref{eq:2-3-4}) and (\ref{eq:2-3-6}) as
\begin{equation}
\begin{cases}
\dt S^{*}(s) & =\left[r(s)S^{*}(s)+\eta(s)L(s)+\theta(s)u^{*}(s)\right]\dt s+\left[\sigma(s)u^{*}(s)-\rho(s)\beta(s)L(s)\right]\dt W_{1}(s)\\
 & \q-\sqrt{1-\rho^{2}(s)}\beta(s)L(s)\dt W_{2}(s),\qq0\leq s\leq T,\\
S^{*}(0) & =s_{0},\\
\dt L(s) & =L(s)\left[\alpha(s)\dt s+\rho(s)\beta(s)\dt W_{1}(s)+\sqrt{1-\rho^{2}(s)}\beta(s)\dt W_{2}(s)\right],\q0\leq s\leq T,\\
L(0) & =l_{0},\\
\dt p_{1}(s;t) & =-r(s)p_{1}(s;t)\dt s+k_{1,1}(s;t)\dt W_{1}(s)+k_{2,1}(s;t)\dt W_{2}(s),\q s\in[t,T],\\
p_{1}(T;t) & =S^{*}(T)-\ep_{t}\left[S^{*}(T)\right]-\left[\omega_{1}L^{-\lambda}(t)+\omega_{2}\right],\\
\dt p_{2}(s;t) & =-\left\{ \eta(s)p_{1}(s;t)+\alpha(s)p_{2}(s;t)-\rho(s)\beta(s)\left[k_{1,1}(s;t)-k_{1,2}(s;t)\right]\right.\\
 & \q\left.-\sqrt{1-\rho^{2}(s)}\beta(s)\left[k_{2,1}(s;t)-k_{2,2}(s;t)\right]\right\} \\
 & \q+k_{1,2}(s;t)\dt W_{1}(s)+k_{2,2}(s;t)\dt W_{2}(s),\q s\in[t,T],\\
p_{2}(T;t) & =0
\end{cases}\label{eq:2-4-1}
\end{equation}
and 
\begin{equation}
\theta(t)p_{1}(t;t)+\sigma(t)k_{1,1}(t;t)=0,\label{eq:2-4-0}
\end{equation}
respectively.

Similar to \citet{hjz12}, we consider the following ansatz:
\begin{eqnarray}
p_{1}(s;t) & = & M_{1}(s)L^{-\lambda}(s)+M_{2}(s)S^{*}(s)+M_{3}(s)L(s)\nonumber \\
 &  & +M_{4}(s)\ep_{t}\left[L^{-\lambda}(s)\right]+M_{5}(s)\ep_{t}\left[S^{*}(s)\right]+M_{6}(s)\ep_{t}\left[L(s)\right]\nonumber \\
 &  & +M_{7}(s)L^{-\lambda}(t)+M_{8}(s)S^{*}(t)+M_{9}(s)L(t)+M_{10}(s),\label{eq:2-4-2}\\
p_{2}(s;t) & = & N_{1}(s)L^{-\lambda}(s)+N_{2}(s)S^{*}(s)+N_{3}(s)L(s)\nonumber \\
 &  & +N_{4}(s)\ep_{t}\left[L^{-\lambda}(s)\right]+N_{5}(s)\ep_{t}\left[S^{*}(s)\right]+N_{6}(s)\ep_{t}\left[L(s)\right]\nonumber \\
 &  & +N_{7}(s)L^{-\lambda}(t)+N_{8}(s)S^{*}(t)+N_{9}(s)L(t)+N_{10}(s),\label{eq:2-4-3}
\end{eqnarray}
where $M_{i}$ and $N_{i}$, $i=1,\cdots,10$, are deterministic differentiable
functions with $\dot{M}_{i}=m_{i}$ and $\dot{N}_{i}=n_{i}$, $i=1,\cdots,10$.
In the following, we get the solutions to $ $$M_{i},i=1,\cdots,10$.
The derivation for $N_{i},i=1,\cdots,10$ are similar, and since they
will not appear in the equilibrium strategy or the equilibrium value
function, we omit the details.

By Itô's formula, it is easy to see that
\begin{eqnarray*}
\dt L^{-\lambda}(s) & = & -\lambda L^{-(\lambda+1)}(s)\dt L(s)+\frac{1}{2}\lambda(\lambda+1)L^{-(\lambda+2)}(s)\dt[L,L](s)\\
 & = & -\lambda L^{-\lambda}(s)\left\{ \left[\alpha(s)-\frac{1}{2}(\lambda+1)\beta^{2}(s)\right]\dt s+\left[\rho(s)\beta(s)\dt W_{1}(s)+\sqrt{1-\rho^{2}(s)}\beta(s)\dt W_{2}(s)\right]\right\} .
\end{eqnarray*}
Consequently, we have
\begin{eqnarray}
\dt p_{1}(s;t) & = & m_{1}(s)L^{-\lambda}(s)\dt s-\lambda M_{1}(s)L^{-\lambda}(s)\left[\alpha(s)-\frac{1}{2}(\lambda+1)\beta^{2}(s)\right]\dt s\nonumber \\
 &  & -\lambda M_{1}(s)L^{-\lambda}(s)\left[\rho(s)\beta(s)\dt W_{1}(s)+\sqrt{1-\rho^{2}(s)}\beta(s)\dt W_{2}(s)\right]\nonumber \\
 &  & +m_{2}(s)S^{*}(s)\dt s+M_{2}(s)\left[r(s)S^{*}(s)+\eta(s)L(s)+\theta(s)u^{*}(s)\right]\dt s\nonumber \\
 &  & +M_{2}(s)\left\{ \left[\sigma(s)u^{*}(s)-\rho(s)\beta(s)L(s)\right]\dt W_{1}(s)-\sqrt{1-\rho^{2}(s)}\beta(s)L(s)\dt W_{2}(s)\right\} \nonumber \\
 &  & +m_{3}(s)L(s)\dt s+M_{3}(s)L(s)\left[\alpha(s)\dt s+\rho(s)\beta(s)\dt W_{1}(s)+\sqrt{1-\rho^{2}(s)}\beta(s)\dt W_{2}(s)\right]\nonumber \\
 &  & +m_{4}(s)\ep_{t}\left[L^{-\lambda}(s)\right]\dt s-\lambda\left[\alpha(s)-\frac{1}{2}(\lambda+1)\beta^{2}(s)\right]M_{4}(s)\ep_{t}\left[L^{-\lambda}(s)\right]\dt s\nonumber \\
 &  & +m_{5}(s)\ep_{t}\left[S^{*}(s)\right]\dt s+M_{5}(s)\ep_{t}\left[r(s)S^{*}(s)+\eta(s)L(s)+\theta(s)u^{*}(s)\right]\dt s\nonumber \\
 &  & +m_{6}(s)\ep_{t}\left[L(s)\right]\dt s+\alpha(s)M_{6}(s)\ep_{t}\left[L(s)\right]\dt s\nonumber \\
 &  & +m_{7}(s)L^{-\lambda}(t)\dt s+m_{8}(s)S^{*}(t)\dt s+m_{9}(s)L(t)\dt s+m_{10}(s)\dt s.\label{eq:2-4-5}
\end{eqnarray}
Comparing the $\dt W_{1}(s)$-term and $\dt W_{2}(s)$-term in (\ref{eq:2-4-5})
and (\ref{eq:2-4-1}), we obtain
\begin{eqnarray}
\begin{cases}
k_{1,1}(s;t) & =-\lambda\rho(s)\beta(s)M_{1}(s)L^{-\lambda}(s)+M_{2}(s)\left[\sigma(s)u^{*}(s)-\rho(s)\beta(s)L(s)\right]\\
 & \q+\rho(s)\beta(s)M_{3}(s)L(s),\\
k_{2,1}(s;t) & =-\lambda\sqrt{1-\rho^{2}(s)}\beta(s)M_{1}(s)L^{-\lambda}(s)-\sqrt{1-\rho^{2}(s)}\beta(s)M_{2}(s)L(s)\\
 & \q+\sqrt{1-\rho^{2}(s)}\beta(s)M_{3}(s)L(s).
\end{cases}\label{eq:2-4-6}
\end{eqnarray}
Putting $p_{1}$ and $k_{11}$ into (\ref{eq:2-4-0}), it yields that
\begin{align*}
\theta(s)\left\{ M_{1}(s)L^{-\lambda}(s)+M_{2}(s)S^{*}(s)+M_{3}(s)L(s)\right.\\
+M_{4}(s)L^{-\lambda}(s)+M_{5}(s)S^{*}(s)+M_{6}(s)L(s)\\
\left.+M_{7}(s)L^{-\lambda}(s)+M_{8}(s)S^{*}(s)+M_{9}(s)L(s)+M_{10}(s)\right\} \\
+\sigma(s)\left\{ -\rho(s)\beta(s)\left[\lambda M_{1}(s)L^{-\lambda}(s)+\left(M_{2}(s)-M_{3}(s)\right)L(s)\right]+\sigma(s)M_{2}(s)u^{*}(s)\right\}  & =0,
\end{align*}
i.e.
\begin{align*}
\left\{ \theta(s)\left[M_{1}(s)+M_{4}(s)+M_{7}(s)\right]-\lambda\sigma(s)\rho(s)\beta(s)M_{1}(s)\right\} L^{-\lambda}(s)\\
+\theta(s)\left[M_{2}(s)+M_{5}(s)+M_{8}(s)\right]S^{*}(s)\\
\left\{ \theta(s)\left[M_{3}(s)+M_{6}(s)+M_{9}(s)\right]-\sigma(s)\rho(s)\beta(s)\left[M_{2}(s)-M_{3}(s)\right]\right\} L(s)\\
+\theta(s)M_{10}(s)+\sigma^{2}(s)M_{2}(s)u^{*}(s) & =0,
\end{align*}
which implies 
\begin{align*}
u^{*}(s) & =f_{1}(s)L^{-\lambda}(s)+f_{2}(s)S^{*}(s)+f_{3}(s)L(s)+f_{4}(s),\q0\leq s\leq T,
\end{align*}
 where
\begin{align}
\begin{cases}
f_{1}(s) & =-\frac{\theta(s)\left[M_{1}(s)+M_{4}(s)+M_{7}(s)\right]-\lambda\sigma(s)\rho(s)\beta(s)M_{1}(s)}{\sigma^{2}(s)M_{2}(s)},\\
f_{2}(s) & =-\frac{\theta(s)\left[M_{2}(s)+M_{5}(s)+M_{8}(s)\right]}{\sigma^{2}(s)M_{2}(s)},\\
f_{3}(s) & =-\frac{\theta(s)\left[M_{3}(s)+M_{6}(s)+M_{9}(s)\right]-\sigma(s)\rho(s)\beta(s)\left[M_{2}(s)-M_{3}(s)\right]}{\sigma^{2}(s)M_{2}(s)},\\
f_{4}(s) & =-\frac{\theta(s)M_{10}(s)}{\sigma^{2}(s)M_{2}(s)}.
\end{cases}\label{eq:2-4-7}
\end{align}
Comparing the $\dt s$-term of $p_{1}(s;t)$ in (\ref{eq:2-4-1})
and (\ref{eq:2-4-5}), we get 
\begin{align*}
r(s)\left\{ M_{1}(s)L^{-\lambda}(s)+M_{2}(s)S^{*}(s)+M_{3}(s)L(s)\right.\\
+M_{4}(s)\ep_{t}\left[L^{-\lambda}(s)\right]+M_{5}(s)\ep_{t}\left[S^{*}(s)\right]+M_{6}(s)\ep_{t}\left[L(s)\right]\\
\left.+M_{7}(s)L^{-\lambda}(t)+M_{8}(s)S^{*}(t)+M_{9}(s)L(t)+M_{10}(s)\right\} \\
+m_{1}(s)L^{-\lambda}(s)-\lambda\left[\alpha(s)-\frac{1}{2}(\lambda+1)\beta^{2}(s)\right]M_{1}(s)L^{-\lambda}(s)\\
+m_{2}(s)S^{*}(s)+M_{2}(s)\left[r(s)S^{*}(s)+\eta(s)L(s)+\theta(s)u^{*}(s)\right]\\
+m_{3}(s)L(s)+\alpha(s)M_{3}(s)L(s)\\
+m_{4}(s)\ep_{t}\left[L^{-\lambda}(s)\right]-\lambda\left[\alpha(s)-\frac{1}{2}(\lambda+1)\beta^{2}(s)\right]M_{4}(s)\ep_{t}\left[L^{-\lambda}(s)\right]\\
+m_{5}(s)\ep_{t}\left[S^{*}(s)\right]+M_{5}(s)\ep_{t}\left[r(s)S^{*}(s)+\eta(s)L(s)+\theta(s)u^{*}(s)\right]\\
+m_{6}(s)\ep_{t}\left[L(s)\right]+\alpha(s)M_{6}(s)\ep_{t}\left[L(s)\right]\\
+m_{7}(s)L^{-\lambda}(t)+m_{8}(s)S^{*}(t)+m_{9}(s)L(t)+m_{10}(s) & =0,
\end{align*}
 i.e.,
\begin{align*}
\left\{ m_{1}(s)+\left(r(s)-\lambda\left[\alpha(s)-\frac{1}{2}(\lambda+1)\beta^{2}(s)\right]\right)M_{1}(s)\right\} L^{-\lambda}(s)\\
+\left\{ m_{2}(s)+2r(s)M_{2}(s)\right\} S^{*}(s)\\
+\left\{ m_{3}(s)+\left[r(s)+\alpha(s)\right]M_{3}(s)+\eta(s)M_{2}(s)\right\} L(s)\\
+\left\{ m_{4}(s)+\left(r(s)-\lambda\left[\alpha(s)-\frac{1}{2}(\lambda+1)\beta^{2}(s)\right]\right)M_{4}(s)\right\} \ep_{t}\left[L^{-\lambda}(s)\right]\\
\left\{ m_{5}(s)+2r(s)M_{5}(s)\right\} \ep_{t}\left[S^{*}(s)\right]\\
\left\{ m_{6}(s)+\left[r(s)+\alpha(s)\right]M_{6}(s)+\eta(s)M_{5}(s)\right\} \ep_{t}\left[L(s)\right]\\
+\left\{ m_{7}(s)+r(s)M_{7}(s)\right\} L^{-\lambda}(t)+\left\{ m_{8}(s)+r(s)M_{8}(s)\right\} S^{*}(t)\\
+\left\{ m_{9}(s)+r(s)M_{9}(s)\right\} L(t)+m_{10}(s)+r(s)M_{10}(s)\\
+\theta(s)M_{2}(s)\left[f_{1}(s)L^{-\lambda}(s)+f_{2}(s)S^{*}(s)+f_{3}(s)L(s)+f_{4}(s)\right]\\
+\theta(s)M_{5}(s)\ep_{t}\left[f_{1}(s)L^{-\lambda}(s)+f_{2}(s)S^{*}(s)+f_{3}(s)L(s)+f_{4}(s)\right] & =0.
\end{align*}
Putting (\ref{eq:2-4-7}) into the above equation, we have
\begin{align}
\left\{ m_{1}(s)+\left(r(s)-\lambda\left[\alpha(s)-\frac{1}{2}(\lambda+1)\beta^{2}(s)\right]\right)M_{1}(s)\right.\nonumber \\
\left.-\frac{\theta^{2}(s)\left[M_{1}(s)+M_{4}(s)+M_{7}(s)\right]-\lambda\theta(s)\sigma(s)\rho(s)\beta(s)M_{1}(s)}{\sigma^{2}(s)}\right\} L^{-\lambda}(s)\nonumber \\
+\left\{ m_{2}(s)+2r(s)M_{2}(s)-\frac{\theta^{2}(s)\left[M_{2}(s)+M_{5}(s)+M_{8}(s)\right]}{\sigma^{2}(s)}\right\} S^{*}(s)\nonumber \\
+\left\{ m_{3}(s)+\left[r(s)+\alpha(s)\right]M_{3}(s)+\eta(s)M_{2}(s)\right.\nonumber \\
\left.-\frac{\theta^{2}(s)\left[M_{3}(s)+M_{6}(s)+M_{9}(s)\right]-\theta(s)\sigma(s)\rho(s)\beta(s)\left[M_{2}(s)-M_{3}(s)\right]}{\sigma^{2}(s)}\right\} L(s)\nonumber \\
+\left\{ m_{4}(s)+\left(r(s)-\lambda\left[\alpha(s)-\frac{1}{2}(\lambda+1)\beta^{2}(s)\right]\right)M_{4}(s)\right.\nonumber \\
\left.-M_{5}(s)\frac{\theta^{2}(s)\left[M_{1}(s)+M_{4}(s)+M_{7}(s)\right]-\lambda\theta(s)\sigma(s)\rho(s)\beta(s)M_{1}(s)}{M_{2}(s)\sigma^{2}(s)}\right\} \ep_{t}\left[L^{-\lambda}(s)\right]\nonumber \\
+\left\{ m_{5}(s)+2r(s)M_{5}(s)-M_{5}(s)\frac{\theta^{2}(s)\left[M_{2}(s)+M_{5}(s)+M_{8}(s)\right]}{M_{2}(s)\sigma^{2}(s)}\right\} \ep_{t}\left[S^{*}(s)\right]\nonumber \\
+\left\{ m_{6}(s)+\left[r(s)+\alpha(s)\right]M_{6}(s)+\eta(s)M_{5}(s)\right.\nonumber \\
\left.-M_{5}(s)\frac{\theta^{2}(s)\left[M_{3}(s)+M_{6}(s)+M_{9}(s)\right]-\theta(s)\sigma(s)\rho(s)\beta(s)\left[M_{2}(s)-M_{3}(s)\right]}{M_{2}(s)\sigma^{2}(s)}\right\} \ep_{t}\left[L(s)\right]\nonumber \\
+\left\{ m_{7}(s)+r(s)M_{7}(s)\right\} L^{-\lambda}(t)+\left\{ m_{8}(s)+r(s)M_{8}(s)\right\} S^{*}(t)+\left\{ m_{9}(s)+r(s)M_{9}(s)\right\} L(t)\nonumber \\
+m_{10}(s)+r(s)M_{10}(s)-\frac{\theta^{2}(s)M_{10}(s)}{\sigma^{2}(s)}-M_{5}(s)\frac{\theta^{2}(s)M_{10}(s)}{M_{2}(s)\sigma^{2}(s)} & =0.\label{eq:2-4-8}
\end{align}
From (\ref{eq:2-4-8}), we can get the following equations for $M_{i},i=1,\cdots,10$:
\begin{equation}
\begin{cases}
m_{1}(s)+\left(r(s)-\lambda\left[\alpha(s)-\frac{1}{2}(\lambda+1)\beta^{2}(s)\right]\right)M_{1}(s)\\
\q-\frac{\theta^{2}(s)\left[M_{1}(s)+M_{4}(s)+M_{7}(s)\right]-\lambda\theta(s)\sigma(s)\rho(s)\beta(s)M_{1}(s)}{\sigma^{2}(s)} & =0,\q s\in[0,T],\\
\\
M_{1}(T)=0;
\end{cases}\label{eq:2-4-9}
\end{equation}
\begin{equation}
\begin{cases}
m_{2}(s)+2r(s)M_{2}(s)-\frac{\theta^{2}(s)\left[M_{2}(s)+M_{5}(s)+M_{8}(s)\right]}{\sigma^{2}(s)} & =0,\q s\in[0,T],\\
M_{2}(T)=1;
\end{cases}\label{eq:2-4-10}
\end{equation}
\begin{equation}
\begin{cases}
m_{3}(s)+\left[r(s)+\alpha(s)\right]M_{3}(s)+\eta(s)M_{2}(s)\\
\q-\frac{\theta^{2}(s)\left[M_{3}(s)+M_{6}(s)+M_{9}(s)\right]-\theta(s)\sigma(s)\rho(s)\beta(s)\left[M_{2}(s)-M_{3}(s)\right]}{\sigma^{2}(s)}=0, & s\in[0,T],\\
M_{3}(T)=0;
\end{cases}\label{eq:2-4-11}
\end{equation}
\begin{equation}
\begin{cases}
m_{4}(s)+\left(r(s)-\lambda\left[\alpha(s)-\frac{1}{2}(\lambda+1)\beta^{2}(s)\right]\right)M_{4}(s)\\
\q-M_{5}(s)\frac{\theta^{2}(s)\left[M_{1}(s)+M_{4}(s)+M_{7}(s)\right]-\lambda\theta(s)\sigma(s)\rho(s)\beta(s)M_{1}(s)}{M_{2}(s)\sigma^{2}(s)}=0, & s\in[0,T],\\
M_{4}(T)=0;
\end{cases}\label{eq:2-4-12}
\end{equation}
\begin{equation}
\begin{cases}
m_{5}(s)+2r(s)M_{5}(s)-M_{5}(s)\frac{\theta^{2}(s)\left[M_{2}(s)+M_{5}(s)+M_{8}(s)\right]}{M_{2}(s)\sigma^{2}(s)}=0, & s\in[0,T],\\
M_{5}(T)=-1;
\end{cases}\label{eq:2-4-13}
\end{equation}
\begin{equation}
\begin{cases}
m_{6}(s)+\left[r(s)+\alpha(s)\right]M_{6}(s)+\eta(s)M_{5}(s)\\
\q-M_{5}(s)\frac{\theta^{2}(s)\left[M_{3}(s)+M_{6}(s)+M_{9}(s)\right]-\theta(s)\sigma(s)\rho(s)\beta(s)\left[M_{2}(s)-M_{3}(s)\right]}{M_{2}(s)\sigma^{2}(s)}=0, & s\in[0,T],\\
M_{6}(T)=0;
\end{cases}\label{eq:2-4-14}
\end{equation}
\begin{equation}
\begin{cases}
m_{7}(s)+r(s)M_{7}(s)=0, & s\in[0,T],\\
M_{7}(T)=-\omega_{1};
\end{cases}\label{eq:2-4-15}
\end{equation}
\begin{equation}
\begin{cases}
m_{8}(s)+r(s)M_{8}(s)=0, & s\in[0,T],\\
M_{8}(T)=0;
\end{cases}\label{eq:2-4-16}
\end{equation}
\begin{equation}
\begin{cases}
m_{9}(s)+r(s)M_{9}(s)=0, & s\in[0,T],\\
M_{9}(T)=0;
\end{cases}\label{eq:2-4-17}
\end{equation}
\begin{equation}
\begin{cases}
m_{10}(s)+r(s)M_{10}(s)-\frac{\theta^{2}(s)M_{10}(s)}{\sigma^{2}(s)}-M_{5}(s)\frac{\theta^{2}(s)M_{10}(s)}{M_{2}(s)\sigma^{2}(s)}=0, & s\in[0,T],\\
M_{10}(T)=-\omega_{2}.
\end{cases}\label{eq:2-4-18}
\end{equation}

In the rest of this subsection, we focus on solving ODEs (\ref{eq:2-4-9})-(\ref{eq:2-4-18}).
First, from ODEs (\ref{eq:2-4-15})-(\ref{eq:2-4-17}), it is easy
to see that
\begin{align}
M_{7}(s) & =-\omega_{1}e^{\int_{s}^{T}r(y)\dt y},\q M_{8}(s)=M_{9}(s)\equiv0,\label{eq:2-4-19}
\end{align}
for $0\leq s\leq T$.

Second, it follows from (\ref{eq:2-4-10}) and (\ref{eq:2-4-13})
that $M_{2}(s)=-M_{5}(s)$, for $0\leq s\leq T$. Consequently, we
have 
\begin{equation}
M_{2}(s)=e^{\int_{s}^{T}2r(y)\dt y},\q M_{5}(s)=-e^{\int_{s}^{T}2r(y)\dt y}.\label{eq:2-4-20}
\end{equation}
Putting (\ref{eq:2-4-20}) into (\ref{eq:2-4-18}) yields that
\begin{equation}
M_{10}(s)=-\omega_{2}e^{\int_{s}^{T}r(y)\dt y}.\label{eq:2-4-21}
\end{equation}

With (\ref{eq:2-4-19}) and (\ref{eq:2-4-20}), we can get $\left(M_{1},M_{4}\right)$
and $\left(M_{3},M_{6}\right)$ from the systems of ODEs 
\begin{equation}
\begin{cases}
m_{1}(s)+\left(r(s)-\lambda\left[\alpha(s)-\frac{1}{2}(\lambda+1)\beta^{2}(s)\right]-\frac{\theta^{2}(s)-\lambda\theta(s)\sigma(s)\rho(s)\beta(s)}{\sigma^{2}(s)}\right)M_{1}(s)\\
\q-\frac{\theta^{2}(s)}{\sigma^{2}(s)}M_{4}(s)-\frac{\theta^{2}(s)}{\sigma^{2}(s)}M_{7}(s)=0, & s\in[0,T],\\
m_{4}(s)+\left(r(s)-\lambda\left[\alpha(s)-\frac{1}{2}(\lambda+1)\beta^{2}(s)\right]+\frac{\theta^{2}(s)}{\sigma^{2}(s)}\right)M_{4}(s)\\
\q+\frac{\theta^{2}(s)-\lambda\theta(s)\sigma(s)\rho(s)\beta(s)}{\sigma^{2}(s)}M_{1}(s)+\frac{\theta^{2}(s)}{\sigma^{2}(s)}M_{7}(s)=0, & s\in[0,T],\\
M_{1}(T)=0,\q M_{4}(T)=0
\end{cases}\label{eq:2-4-22}
\end{equation}
and 
\begin{equation}
\begin{cases}
m_{3}(s)+\left[r(s)+\alpha(s)-\frac{\theta^{2}(s)+\theta(s)\sigma(s)\rho(s)\beta(s)}{\sigma^{2}(s)}\right]M_{3}(s)-\frac{\theta^{2}(s)}{\sigma^{2}(s)}M_{6}(s)\\
\q+\left[\eta(s)+\frac{\theta(s)\rho(s)\beta(s)}{\sigma(s)}\right]M_{2}(s)=0, & s\in[0,T],\\
m_{6}(s)+\left[r(s)+\alpha(s)+\frac{\theta^{2}(s)}{\sigma^{2}(s)}\right]M_{6}(s)+\frac{\theta^{2}(s)+\theta(s)\sigma(s)\rho(s)\beta(s)}{\sigma^{2}(s)}M_{3}(s)\\
\q-\left[\eta(s)+\frac{\theta(s)\rho(s)\beta(s)}{\sigma(s)}\right]M_{2}(s)=0, & s\in[0,T],\\
M_{3}(T)=0,\q M_{6}(T)=0,
\end{cases}\label{eq:2-4-23}
\end{equation}
respectively. It follows from (\ref{eq:2-4-22}) and (\ref{eq:2-4-23})
that
\begin{eqnarray}
M_{1}(s) & = & -M_{4}(s)\nonumber \\
 & = & \exp\left\{ \int_{s}^{T}\left(r(y)-\lambda\left[\alpha(y)-\frac{1}{2}(\lambda+1)\beta^{2}(y)\right]+\frac{\lambda\theta(y)\rho(y)\beta(y)}{\sigma(y)}\right)\dt y\right\} \nonumber \\
 &  & \times\int_{s}^{T}\exp\left\{ -\int_{z}^{T}\left(r(y)-\lambda\left[\alpha(y)-\frac{1}{2}(\lambda+1)\beta^{2}(y)\right]+\frac{\lambda\theta(y)\rho(y)\beta(y)}{\sigma(y)}\right)\dt y\right\} \left[-\frac{\theta^{2}(z)}{\sigma^{2}(z)}M_{7}(z)\right]\dt z\nonumber \\
 & = & \omega_{1}e^{\int_{s}^{T}r(y)\dt y}\int_{s}^{T}\exp\left\{ \int_{z}^{s}\left(\lambda\left[\alpha(y)-\frac{1}{2}(\lambda+1)\beta^{2}(y)\right]-\frac{\lambda\theta(y)\rho(y)\beta(y)}{\sigma(y)}\right)\dt y\right\} \frac{\theta^{2}(z)}{\sigma^{2}(z)}\dt z,\label{eq:3-30}
\end{eqnarray}
and 
\begin{eqnarray}
M_{3}(s) & = & -M_{6}(s)\nonumber \\
 & = & \exp\left\{ \int_{s}^{T}\left[r(y)+\alpha(y)-\frac{\theta(y)\rho(y)\beta(y)}{\sigma(y)}\right]\dt y\right\} \nonumber \\
 &  & \times\int_{s}^{T}\exp\left\{ -\int_{z}^{T}\left[r(y)+\alpha(y)-\frac{\theta(y)\rho(y)\beta(y)}{\sigma(y)}\right]\dt y\right\} \left[\eta(z)+\frac{\theta(z)\rho(z)\beta(z)}{\sigma(z)}\right]M_{2}(z)\dt z\nonumber \\
 & = & e^{\int_{s}^{T}2r(y)\dt y}\int_{s}^{T}\exp\left\{ \int_{z}^{s}\left[\eta(y)+\frac{\theta(y)\rho(y)\beta(y)}{\sigma(y)}\right]\dt y\right\} \left[\eta(z)+\frac{\theta(z)\rho(z)\beta(z)}{\sigma(z)}\right]\dt z,\label{eq:3-31}
\end{eqnarray}
respectively.

\subsection{The equilibrium strategy}

From (\ref{eq:2-4-7}) and the results given by last subsection, we
have 
\begin{align*}
f_{1}(s) & =-\frac{\theta(s)M_{7}(s)-\lambda\sigma(s)\rho(s)\beta(s)M_{1}(s)}{\sigma^{2}(s)M_{2}(s)},\\
f_{2}(s) & =0,\\
f_{3}(s) & =\frac{\rho(s)\beta(s)}{\sigma(s)}\left[1-\frac{M_{3}(s)}{M_{2}(s)}\right],\\
f_{4}(s) & =-\frac{\theta(s)M_{10}(s)}{\sigma^{2}(s)M_{2}(s)}.
\end{align*}

\begin{thm}
\label{thm:3-1}Let
\[
M_{2}(s)=e^{\int_{s}^{T}2r(y)\dt y},\q M_{7}(s)=-\omega_{1}e^{\int_{s}^{T}r(y)\dt y},\q M_{10}(s)=-\omega_{2}e^{\int_{s}^{T}r(y)\dt y},
\]
$M_{1}$ and $M_{3}$ be given by (\ref{eq:3-30}) and (\ref{eq:3-31}),
respectively. Then the strategy defined by
\begin{align*}
u^{*}(s) & =f_{1}(s)L^{-\lambda}(s)+f_{3}(s)L(s)+f_{4}(s)
\end{align*}
is an equilibrium strategy.\end{thm}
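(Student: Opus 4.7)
The plan is to invoke Theorem \ref{thm:2-1} as the sufficient condition, so I only need to verify its two hypotheses (i) and (ii) for the candidate strategy $u^*$. All of the work in Section 3.1 (the ansatz, the ODE derivations, and the explicit solutions for the $M_i$) is set up precisely so that the verification is essentially algebraic; I would present it as a collation of what has already been computed.

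First, I would argue that the forward SDE in (\ref{eq:2-3-4}) admits a solution $\mbd X^{*}\in L_{\mcl F}^{2}(\Omega;C(0,T,\mbb R^{2}))$ under $u^{*}$. The second component $L$ is a geometric Brownian motion with bounded coefficients, so $\ep[\sup_{s\le T}L^{p}(s)]<\infty$ for every $p\in\mbb R$; in particular all powers $L$, $L^{-\lambda}$ that appear in $u^{*}(s)=f_{1}(s)L^{-\lambda}(s)+f_{3}(s)L(s)+f_{4}(s)$ are square-integrable, giving $u^{*}\in L_{\mcl F}^{2}(0,T;\mbb R)$. The equation for $S^{*}$ is then linear in $S^{*}$ with coefficients driven by $L$ and $L^{-\lambda}$ whose required moments are available, so standard linear SDE theory yields the unique strong solution in the desired space.

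Second, I would construct the backward component via the ansatz (\ref{eq:2-4-2})--(\ref{eq:2-4-3}) with the $M_{i}$ given in (\ref{eq:2-4-19})--(\ref{eq:2-4-21}), (\ref{eq:3-30}), (\ref{eq:3-31}) (and analogous $N_{i}$, as the authors note). Applying It\^o's formula as in (\ref{eq:2-4-5}) determines $(\mbd k_{1},\mbd k_{2})$ through the matching (\ref{eq:2-4-6}), and comparing $ds$-terms reproduces (\ref{eq:2-4-8}), whose decomposition into the ten ODEs (\ref{eq:2-4-9})--(\ref{eq:2-4-18}) is satisfied by construction. The moment bounds on $L^{\pm \lambda}$, $L$, $S^{*}$ together with boundedness of the deterministic functions $M_{i}(\cdot)$ on $[0,T]$ guarantee that $\mbd p, \mbd k_{1}, \mbd k_{2}$ lie in the required $L^{2}$-spaces. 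The terminal condition is verified by substituting $s=T$: the values $M_{1}(T)=M_{3}(T)=M_{4}(T)=M_{6}(T)=M_{8}(T)=M_{9}(T)=0$, $M_{2}(T)=1$, $M_{5}(T)=-1$, $M_{7}(T)=-\omega_{1}$, $M_{10}(T)=-\omega_{2}$ recover exactly $p_{1}(T;t)=S^{*}(T)-\ep_{t}[S^{*}(T)]-[\omega_{1}L^{-\lambda}(t)+\omega_{2}]$; the $p_{2}$ terminal is handled by the analogous choice of $N_{i}(T)$.

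Third, for condition (\ref{eq:2-3-5}), the process $\Lambda(s;t)=\theta(s)p_{1}(s;t)+\sigma(s)k_{1,1}(s;t)$ is integrable on $[t,T]$ by the same moment estimates. For the limit, I would rely on (\ref{eq:2-3-6}): evaluating the ansatz at $s=t$ one has $\ep_{t}[L^{-\lambda}(s)]=L^{-\lambda}(t)$, $\ep_{t}[S^{*}(s)]=S^{*}(t)$, $\ep_{t}[L(s)]=L(t)$, so $\theta(t)p_{1}(t;t)+\sigma(t)k_{1,1}(t;t)$ collapses to exactly the algebraic identity in $L^{-\lambda}(t), S^{*}(t), L(t), 1$ that was solved for in (\ref{eq:2-4-7}); substituting the explicit $f_{1},f_{2}\!\equiv\!0,f_{3},f_{4}$ from Section 3.2 annihilates this identity, giving $\Lambda(t;t)=0$ a.s. Continuity of $s\mapsto\Lambda(s;t)$ in $L^{1}(\Omega,\mcl F_{t})$, which follows from the continuity of $\mbd X^{*}$ and of the deterministic $M_{i}$, then yields $\lim_{s\downarrow t}\ep_{t}[\Lambda(s;t)]=0$. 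The conditions of Theorem \ref{thm:2-1} are therefore met, so $u^{*}$ is an equilibrium strategy.

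The main obstacle I anticipate is purely notational: keeping the bookkeeping of ten $M_{i}$ and ten $N_{i}$ straight when matching coefficients of the six ``basis'' quantities $L^{-\lambda}(s), S^{*}(s), L(s), L^{-\lambda}(t), S^{*}(t), L(t), 1$ (plus their conditional expectations) through the ansatz. There is no real analytic difficulty, since the boundedness of coefficients and the GBM structure of $L$ make all the moment estimates routine; the content of the theorem is that the Riccati-like ODE system (\ref{eq:2-4-9})--(\ref{eq:2-4-18}) decouples into the triangular blocks (\ref{eq:2-4-15})--(\ref{eq:2-4-17}), (\ref{eq:2-4-10})+(\ref{eq:2-4-13}), (\ref{eq:2-4-18}), (\ref{eq:2-4-22}), (\ref{eq:2-4-23}) and can be integrated in closed form.
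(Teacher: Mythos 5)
Your proposal is correct and follows essentially the same route as the paper: invoke Theorem \ref{thm:2-1}, construct the adjoint processes from the ansatz (\ref{eq:2-4-2})--(\ref{eq:2-4-3}) and (\ref{eq:2-4-6}) with the explicitly solved $M_i$, note admissibility from the boundedness of $f_1,f_3,f_4$ and the moment bounds on the geometric Brownian motion $L$, and verify (\ref{eq:2-3-5}) --- the paper does the last step by writing $\Lambda(s;t)$ explicitly as a sum of differences each vanishing as $s\downarrow t$, which is the same computation underlying your $\Lambda(t;t)=0$ plus $L^1$-continuity argument.
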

\begin{proof}
Define $p_{1}$, $p_{2}$ and $\left(\mbd k_{1},\mbd k_{2}\right)$
by (\ref{eq:2-4-2}), (\ref{eq:2-4-3}) and (\ref{eq:2-4-6}), respectively.
Obviously, $\left(u^{*},\mbd X^{*},\mbd p,\left(\mbd k_{1},\mbd k_{2}\right)\right)$
satisfies the system (\ref{eq:2-3-4}). Furthermore, it is easy to
see that $f_{1}(s),\ f_{3}(s)$ and $f_{4}(s)$ are uniformly bounded.
Thus, we have $\mbd X^{*}\in L_{\mcl F}^{2}(\Omega;C(0,T,\mbb R^{2}))$
and $u^{*}\in L_{\mcl F}^{2}(0,T;\mbb R)$. 

Now, we are going to check whether the condition (\ref{eq:2-3-5})
is satisfied. Note that

\begin{eqnarray*}
\Lambda(s;t) & = & \theta(s)p_{1}(s;t)+\sigma(s)k_{1,1}(s;t)\\
 & = & \theta(s)\left\{ M_{1}(s)L^{-\lambda}(s)+M_{2}(s)S^{*}(s)+M_{3}(s)L(s)\right.\\
 &  & -M_{1}(s)\ep_{t}\left[L^{-\lambda}(s)\right]-M_{2}(s)\ep_{t}\left[S^{*}(s)\right]\\
 &  & \left.-M_{3}(s)\ep_{t}\left[L(s)\right]+M_{7}(s)L^{-\lambda}(t)+M_{10}(s)\right\} \\
 &  & +\sigma(s)\left\{ -\lambda\rho(s)\beta(s)M_{1}(s)L^{-\lambda}(s)+M_{2}(s)\left[\sigma(s)u^{*}(s)-\rho(s)\beta(s)L(s)\right]\right.\\
 &  & \left.+\rho(s)\beta(s)M_{3}(s)L(s)\right\} \\
 & = & \theta(s)M_{1}(s)\left[L^{-\lambda}(s)-\ep_{t}\left[L^{-\lambda}(s)\right]\right]+\theta(s)M_{2}(s)\left[S^{*}(s)-\ep_{t}\left[S^{*}(s)\right]\right]\\
 &  & +\theta(s)M_{3}(s)\left[L(s)-\ep_{t}\left[L(s)\right]\right]+\theta M_{7}(s)\left[L^{-\lambda}(t)-L^{-\lambda}(s)\right].
\end{eqnarray*}
Obviously, $\Lambda$ satisfies the first condition in (\ref{eq:2-3-5}).
It follows from
\[
\lim_{s\downarrow t}\ep_{t}\left[\left|L^{-\lambda}(s)-\ep_{t}\left[L^{-\lambda}(s)\right]\right|\right]=0,\ \mathrm{and}\ \lim_{s\downarrow t}\ep_{t}\left[\left|L^{-\lambda}(s)-L^{-\lambda}(t)\right|\right]=0,
\]
\[
\lim_{s\downarrow t}\ep_{t}\left[\left|\mbd X^{*}(s)-\ep_{t}\left[\mbd X^{*}(s)\right]\right|\right]=\mbd0,\q\text{and}\q\lim_{s\downarrow t}\ep_{t}\left[\left|\mbd X^{*}(s)-\mbd X^{*}(t)\right|\right]=\mbd0
\]
that $\Lambda$ satisfies the second condition in (\ref{eq:2-3-5}).\end{proof}
\begin{rem}
\label{rmk:3-2}Although we are looking for the equilibrium strategy
$u^{*}$ among the open-loop controls, it is a feedback control of
$L^{-\lambda}$ and $L$. Recall that the equilibrium strategy obtained
in \citet{wwyy12} is only a linear feedback control of the liability.
The results are different because the risk aversion considered in
this paper depends on $L^{-\lambda}$, while a constant risk aversion
is considered in \citet{wwyy12} if there is only one regime. 

If $\omega_{1}=0$, then
\begin{align*}
f_{1}(s) & =0,\\
f_{3}(s) & =\frac{\rho(s)\beta(s)}{\sigma(s)}\left[1-\frac{M_{3}(s)}{M_{2}(s)}\right],\\
f_{4}(s) & =-\frac{\theta(s)M_{10}(s)}{\sigma^{2}(s)M_{2}(s)},
\end{align*}
which means that equilibrium strategy $u^{*}$ always depends on the
liability, even if the risk aversion is independent of the liability.
Furthermore, it is interesting that we get the same equilibrium strategy
with \citet{wwyy12} in this special case (see Appendix).
\end{rem}

\section{The Equilibrium Value Function}

In this section, we are going to derive the equilibrium value function
$V$ which is defined by (\ref{eq:2-2-0-1}). The techniques are similar
to \citet{cl06}. To simplify the notation, we suppress the superscript
of $S^{*}$.

We can rewrite $S$ by 

\[
\begin{cases}
\dt S(s) & =\left\{ r(s)S(s)+\left[\eta(s)+\theta(s)f_{3}(s)\right]L(s)+\theta(s)f_{1}(s)L^{-\lambda}(s)+\theta(s)f_{4}(s)\right\} \dt s\\
 & \q+\left\{ \left[\sigma(s)f_{3}(s)-\rho(s)\beta(s)\right]L(s)+\sigma(s)f_{1}(s)L^{-\lambda}(s)+\sigma(s)f_{4}(s)\right\} \dt W_{1}(s)\\
 & \q-\sqrt{1-\rho^{2}(s)}\beta(s)L(s)\dt W_{2}(s),\\
S(0) & =s_{0}.
\end{cases}
\]

Recall that 
\[
\begin{cases}
\dt L(s) & =L(s)\left[\alpha(s)\dt s+\rho(s)\beta(s)\dt W_{1}(s)+\sqrt{1-\rho^{2}(s)}\beta(s)\dt W_{2}(s)\right],\\
L(0) & =l_{0}.
\end{cases}
\]
Let $\xi(s,q):=\theta(s)+q\sigma(s)\rho(s)\beta(s)$, for all $q\in\mathbb{R}$.
Then for all $q\in\mathbb{R}$ applying the Itô formula, we derive
the following SDEs for $L^{q},\ SL^{q},\ $ and $S^{2}$: 
\[
\begin{cases}
\dt L^{q}(s) & =qL^{q}(s)\left\{ \left[\alpha(s)-\frac{1}{2}(1-q)\beta^{2}(s)\right]\dt s+\left[\rho(s)\beta(s)\dt W_{1}(s)+\sqrt{1-\rho^{2}(s)}\beta(s)\dt W_{2}(s)\right]\right\} ,\\
L^{q}(0) & =l_{0}^{q};
\end{cases}
\]
\[
\begin{cases}
\dt S(s)L^{q}(s) & =\left\{ \left[r(s)+q\left[\alpha(s)-\frac{1}{2}(1-q)\beta^{2}(s)\right]\right]S(s)L^{q}(s)\right.\\
 & \q+\left[\eta(s)+\xi(s,q)f_{3}(s)-q\beta^{2}(s)\right]L^{q+1}(s)\\
 & \q+\xi(s,q)f_{1}(s)L^{-\lambda+q}(s)\\
 & \q\left.+\xi(s,q)f_{4}(s)L^{q}(s)\right\} \dt s\\
 & \q+(\cdots)\dt W_{1}(s)+(\cdots)\dt W_{2}(s),\\
S(0)L^{q}(0) & =s_{0}l_{0}^{q};
\end{cases}
\]
\[
\begin{cases}
\dt S^{2}(s) & =\left\{ 2r(s)S^{2}(s)+2\theta(s)f_{4}(s)S(s)\right.\\
 & \q+2\left\{ \eta(s)+\theta(s)f_{3}(s)\right\} S(s)L(s)\\
 & \q+\left[\sigma^{2}(s)f_{3}^{2}(s)-2\rho(s)\beta(s)\sigma(s)f_{3}(s)+\beta^{2}(s)\right]L^{2}(s)\\
 & \q+2\theta(s)f_{1}(s)S(s)L^{-\lambda}(s)+\sigma^{2}(s)f_{1}^{2}(s)L^{-2\lambda}(s)\\
 & \q+2\sigma(s)\left[\sigma(s)f_{3}(s)-\rho(s)\beta(s)\right]f_{1}(s)L^{-\lambda+1}(s)\\
 & \q+2\sigma^{2}(s)f_{1}(s)f_{4}(s)L^{-\lambda}(s)\\
 & \q+\left.2\sigma(s)\left[\sigma(s)f_{3}(s)-\rho(s)\beta(s)\right]f_{4}(s)L(s)+\left[\sigma(s)f_{4}(s)\right]{}^{2}\right\} \dt s\\
 & \q+(\cdots)\dt W_{1}(s)+(\cdots)\dt W_{2}(s),\\
S^{2}(0) & =s_{0}^{2}.
\end{cases}
\]
Therefore, for $s\in[t,T]$ and $q\in\mathbb{R}$ we obtain
\begin{equation}
\begin{cases}
\dt\ep_{t}\left[L^{q}(s)\right] & =q\left[\alpha(s)-\frac{1}{2}(1-q)\beta^{2}(s)\right]\ep_{t}\left[L^{q}(s)\right]\dt s,\\
\ep_{t}\left[L^{q}(t)\right] & =l_{t}^{q};
\end{cases}\label{eq:2-1}
\end{equation}
\begin{equation}
\begin{cases}
\dt\ep_{t}[S(s)] & =\left\{ r(s)\ep_{t}[S(s)]+\left[\eta(s)+\theta(s)f_{3}(s)\right]\ep_{t}[L(s)]\right.\\
 & \left.+\theta(s)f_{1}(s)\ep_{t}\left[L^{-\lambda}(s)\right]+\theta(s)f_{4}(s)\right\} \dt s,\\
\ep_{t}[S(t)] & =s_{t};
\end{cases}\label{eq:2-3}
\end{equation}
\begin{equation}
\begin{cases}
\dt\ep_{t}\left[S(s)L^{q}(s)\right] & =\left\{ \left[r(s)+q\left[\alpha(s)-\frac{1}{2}(1-q)\beta^{2}(s)\right]\right]\ep_{t}\left[S(s)L^{q}(s)\right]\right.\\
 & \q+\left[\eta(s)+\xi(s,q)f_{3}(s)-q\beta^{2}(s)\right]\ep_{t}\left[L^{q+1}(s)\right]\\
 & \q+\xi(s,q)f_{1}(s)\ep_{t}\left[L^{-\lambda+q}(s)\right]\\
 & \q\left.+\xi(s,q)f_{4}(s)\ep_{t}\left[L^{q}(s)\right]\right\} \dt s\\
 & \q+(\cdots)\dt W_{1}(s)+(\cdots)\dt W_{2}(s),\\
\ep_{t}\left[S(t)L^{q}(t)\right] & =s_{t}l_{t}^{q};
\end{cases}\label{eq:2-4}
\end{equation}
and
\begin{equation}
\begin{cases}
\dt\ep_{t}[S^{2}(s)] & =\left\{ 2r(s)\ep_{t}[S^{2}(s)]+2\theta(s)f_{4}(s)\ep_{t}[S(s)]\right.\\
 & \q+2\left[\eta(s)+\theta(s)f_{3}(s)\right]\ep_{t}[S(s)L(s)]\\
 & \q+\left[\sigma^{2}(s)f_{3}^{2}(s)-2\rho(s)\beta(s)\sigma(s)f_{3}(s)+\beta^{2}(s)\right]\ep_{t}[L^{2}(s)]\\
 & \q+2\theta(s)f_{1}(s)\ep_{t}\left[S(s)L^{-\lambda}(s)\right]+\sigma^{2}(s)f_{1}^{2}(s)\ep_{t}\left[L^{-2\lambda}(s)\right]\\
 & \q+2\sigma(s)\left[\sigma(s)f_{3}(s)-\rho(s)\beta(s)\right]f_{1}(s)\ep_{t}\left[L^{-\lambda+1}(s)\right]\\
 & \q+2\sigma^{2}(s)f_{1}(s)f_{4}(s)\ep_{t}\left[L^{-\lambda}(s)\right]\\
 & \q+\left.2\sigma(s)\left[\sigma(s)f_{3}(s)-\rho(s)\beta(s)\right]f_{4}(s)\ep_{t}[L(s)]+\left[\sigma(s)f_{4}(s)\right]{}^{2}\right\} \dt s,\\
\ep_{t}[S^{2}(t)] & =s_{t}^{2}.
\end{cases}\label{eq:2-5}
\end{equation}
Then solving ODEs (\ref{eq:2-1}), (\ref{eq:2-3}), and (\ref{eq:2-4}),
we obtain
\begin{align}
\ep_{t}\left[L^{q}(T)\right] & =l_{t}^{q}e^{\int_{t}^{T}q\left[\alpha(y)-\frac{1}{2}(1-q)\beta^{2}(y)\right]\dt y};\\
\ep_{t}\left[S(T)\right] & =s_{t}e^{\int_{t}^{T}r(y)\dt y}+\tilde{S}_{I}(t,T)\ep_{t}[L(T)]+\tilde{S}_{II}(t,T)\ep_{t}\left[L^{-\lambda}(T)\right]+\tilde{S}_{III}(t,T),\label{eq:2-6}
\end{align}
where
\begin{align*}
\tilde{S}_{I}(t,T) & =\int_{t}^{T}\left[\eta(v)+\theta(v)f_{3}(v)\right]e^{\int_{v}^{T}\eta(y)\dt y}\dt v,\\
\tilde{S}_{II}(t,T) & =\int_{t}^{T}\theta(v)f_{1}(v)e^{\int_{v}^{T}\left[r(y)+\lambda\alpha(y)-\frac{1}{2}\lambda(\lambda+1)\beta^{2}(y)\right]\dt y}\dt v,\\
\tilde{S}_{III}(t,T) & =\int_{t}^{T}\theta(v)f_{4}(v)e^{\int_{v}^{T}r(y)\dt y}\dt v;
\end{align*}
and 
\begin{align*}
\ep_{t}[S(T)L^{q}(T)] & =s_{t}l_{t}^{q}e^{\int_{t}^{T}\left\{ r(y)+q\left[\alpha(y)-\frac{1}{2}(1-q)\beta^{2}(y)\right]\right\} \dt y}+\widetilde{SL}_{I}(t,T,q)\ep_{t}\left[L^{q+1}(T)\right]\\
 & \q+\widetilde{SL}_{II}(t,T,q)\ep_{t}\left[L^{-\lambda+q}(T)\right]+\widetilde{SL}_{III}(t,T,q)\ep_{t}\left[L^{q}(T)\right],
\end{align*}
where
\begin{align*}
\widetilde{SL}_{I}(t,T,q) & =\int_{t}^{T}\left[\eta(v)+\xi(v,q)f_{3}(v)-q\beta^{2}(v)\right]e^{\int_{v}^{T}\left[\eta(y)-q\beta^{2}(y)\right]\dt y}\dt v,\\
\widetilde{SL}_{II}(t,T,q) & =\int_{t}^{T}\xi(v,q)f_{1}(v)e^{\int_{v}^{T}\left[r(y)+\lambda\alpha(y)-\frac{1}{2}\lambda\left(\lambda-2q+1\right)\beta^{2}(y)\right]\dt y}\dt v,\\
\widetilde{SL}_{III}(t,T,q) & =\int_{t}^{T}\xi(v,q)f_{4}(v)e^{\int_{v}^{T}r(y)\dt y}\dt v.
\end{align*}
Then (\ref{eq:2-5}) can be rewritten by
\begin{equation}
\begin{cases}
\dt\ep_{t}[S^{2}(s)] & =\left\{ 2r(s)\ep_{t}[S^{2}(s)]+F(t,s)\ep_{t}[L(s)]+G(t,s)\ep_{t}[L^{2}(s)]+H(t,s)\ep_{t}\left[L^{-\lambda}(s)\right]\right.\\
 & \q\left.+J(t,s)\ep_{t}\left[L^{-\lambda+1}(s)\right]+K(t,s)\ep_{t}\left[L^{-2\lambda}(s)\right]+M(t,s)\right\} \dt s,\\
\ep_{t}[S^{2}(t)] & =s_{t}^{2},
\end{cases}\label{eq:2-5-1-1}
\end{equation}
where
\begin{align*}
F(t,s) & =2\theta(s)f_{4}(s)\tilde{S}_{I}(t,s)+2\left[\eta(s)+\theta(s)f_{3}(s)\right]\widetilde{SL}_{III}(t,s,1)+2\sigma(s)\left[\sigma(s)f_{3}(s)-\rho(s)\beta(s)\right]f_{4}(s),\\
G(t,s) & =2\left[\eta(s)+\theta(s)f_{3}(s)\right]\widetilde{SL}_{I}(t,s,1)+\left[\sigma^{2}(s)f_{3}^{2}(s)-2\rho(s)\beta(s)\sigma(s)f_{3}(s)+\beta^{2}(s)\right],\\
H(t,s) & =2\theta(s)f_{4}(s)\tilde{S}_{II}(t,s)+2\theta(s)f_{1}(s)\widetilde{SL}_{III}(t,s,-\lambda)+2\sigma^{2}(s)f_{1}(s)f_{4}(s),\\
J(t,s) & =2\left[\eta(s)+\theta(s)f_{3}(s)\right]\widetilde{SL}_{II}(t,s,1)+2\theta(s)f_{1}(s)\widetilde{SL}_{I}(t,s,-\lambda)\\
 & \q+2\sigma(s)\left[\sigma(s)f_{3}(s)-\rho(s)\beta(s)\right]f_{1}(s),\\
K(t,s) & =2\theta(s)f_{1}(s)\widetilde{SL}_{II}(t,s,-\lambda)+\sigma^{2}(s)f_{1}^{2}(s),\\
M(t,s) & =2\theta(s)f_{4}(s)\left[s_{t}e^{\int_{t}^{s}r(y)\dt y}+\tilde{S}_{III}(t,s)\right]+2s_{t}l_{t}e^{\int_{t}^{s}[r(y)+\alpha(y)]\dt y}\left[\eta(s)+\theta(s)f_{3}(s)\right]\\
 & \q+2s_{t}l_{t}^{-\lambda}e^{\int_{t}^{s}\left\{ r(y)-\lambda\left[\alpha(y)-\frac{1}{2}(1+\lambda)\beta^{2}(y)\right]\right\} \dt y}\theta(s)f_{1}(s)+\left[\sigma(s)f_{4}(s)\right]{}^{2}.
\end{align*}
Thus, we obtain
\begin{align}
\ep_{t}[S^{2}(T)] & =s_{t}^{2}e^{2\int_{t}^{T}r(y)\dt y}+\widetilde{S^{2}}_{I}(t,T)\ep_{t}[L(T)]+\widetilde{S^{2}}_{II}(t,T)\ep_{t}[L^{2}(T)]\nonumber \\
 & \q+\widetilde{S^{2}}_{III}(t,T)\ep_{t}\left[L^{-\lambda}(T)\right]+\widetilde{S^{2}}_{IV}(t,T)\ep_{t}\left[L^{-\lambda+1}(T)\right]\nonumber \\
 & \q+\widetilde{S^{2}}_{V}(t,T)\ep_{t}\left[L^{-2\lambda}(T)\right]+\widetilde{S^{2}}_{VI}(t,T),\label{eq:2-7}
\end{align}
where
\begin{align*}
\widetilde{S^{2}}_{I}(t,T) & =\int_{t}^{T}e^{\int_{v}^{T}\left[2r(y)-\alpha(y)\right]\dt y}F(t,v)\dt v,\\
\widetilde{S^{2}}_{II}(t,T) & =\int_{t}^{T}e^{\int_{v}^{T}\left[2\eta(y)-\beta^{2}(y)\right]\dt y}G(t,v)\dt v,\\
\widetilde{S^{2}}_{III}(t,T) & =\int_{t}^{T}e^{\int_{v}^{T}\left\{ 2r(y)+\lambda\left[\alpha(y)-\frac{1}{2}(\lambda+1)\beta^{2}(y)\right]\right\} \dt y}H(t,v)\dt v,\\
\widetilde{S^{2}}_{IV}(t,T) & =\int_{t}^{T}e^{\int_{v}^{T}\left\{ 2r(y)+(\lambda-1)\left[\alpha(y)-\frac{1}{2}\lambda\beta^{2}(y)\right]\right\} \dt y}J(t,v)\dt v,\\
\widetilde{S^{2}}_{V}(t,T) & =\int_{t}^{T}e^{\int_{v}^{T}\left\{ 2r(y)+2\lambda\left[\alpha(y)-\frac{1}{2}(2\lambda+1)\beta^{2}(y)\right]\right\} \dt y}K(t,v)\dt v,\\
\widetilde{S^{2}}_{VI}(t,T) & =\int_{t}^{T}e^{\int_{v}^{T}2r(y)\dt y}M(t,v)\dt v.
\end{align*}
In summary, we have the following result.
\begin{prop}
Let $\left(S,L\right)$ be the solution to the SDE (\ref{eq:2-1-1})
with $u$ replaced by the equilibrium strategy $u^{*}$. The equilibrium
value function is given by 
\[
V(t,(s_{t},l_{t}))=\frac{1}{2}\ep_{t}\left[S^{2}(T)\right]-\frac{1}{2}\left(\ep_{t}\left[S(T)\right]\right)^{2}-\left(\omega_{1}l_{t}^{-\lambda}+\omega_{2}\right)\ep_{t}\left[S(T)\right],
\]
where $\ep_{t}\left[S(T)\right]$ and $\ep_{t}\left[S^{2}(T)\right]$
are given by (\ref{eq:2-6}) and (\ref{eq:2-7}), respectively.
\end{prop}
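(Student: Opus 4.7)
The plan is to start directly from the definition $V(t,\mbd X^*(t)) := J(t,\mbd X^*(t);u^*)$ and the expression (\ref{eq:2-1-2}) for the cost functional, so that the task reduces to computing the two conditional moments $\ep_t[S(T)]$ and $\ep_t[S^2(T)]$ along the equilibrium state process. Once these are in hand, the identity for $V$ follows by direct substitution, since $L^{-\lambda}(t)=l_t^{-\lambda}$ on the event $\{L(t)=l_t\}$.

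First I would substitute the equilibrium control $u^*(s)=f_1(s)L^{-\lambda}(s)+f_3(s)L(s)+f_4(s)$ obtained in Theorem \ref{thm:3-1} into the dynamics of $S$. This rewrites the SDE for $S$ so that its drift and diffusion are explicit affine combinations of $1$, $L$ and $L^{-\lambda}$, which is exactly the form displayed at the start of Section 4. Next, using It\^o's formula together with the log-normal SDE for $L$, I would derive the SDEs for $L^q$, $SL^q$, and $S^2$ listed in the excerpt. The point is that these processes all have drifts that are linear combinations of processes of the form $L^{q'}$, $SL^{q'}$ and $S^2$ with exponents restricted to the finite set $\{q+1,\ q-\lambda,\ q\}$ (and analogously for $S^2$), so the hierarchy closes.

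Taking conditional expectations $\ep_t[\cdot]$ kills the martingale parts and produces the linear first-order ODEs (\ref{eq:2-1})--(\ref{eq:2-5}). I would solve them in the order they are stated: (i) the pure log-normal ODE (\ref{eq:2-1}) gives the closed-form moments $\ep_t[L^q(s)]$ by integrating factor; (ii) the ODE (\ref{eq:2-3}) for $\ep_t[S(s)]$ has a source term expressible through the already-known $\ep_t[L(s)]$ and $\ep_t[L^{-\lambda}(s)]$, and the integrating factor $e^{-\int_t^s r(y)\dt y}$ yields (\ref{eq:2-6}); (iii) the same integrating-factor argument applied to (\ref{eq:2-4}) gives the formula for $\ep_t[S(s)L^q(s)]$; (iv) finally substituting all of these into (\ref{eq:2-5}), rewritten as (\ref{eq:2-5-1-1}), and integrating with the factor $e^{-2\int_t^s r(y)\dt y}$ produces the expression (\ref{eq:2-7}) for $\ep_t[S^2(T)]$.

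The routine part is the bookkeeping of many integrals; nothing nonlinear appears, so existence and uniqueness of the solutions to the linear ODEs is automatic from the boundedness of the coefficients $r,\alpha,\beta,\theta,\sigma,\rho$ and of $f_1,f_3,f_4$ already established in the proof of Theorem \ref{thm:3-1}. The main obstacle I expect is tracking the correct exponents and integrating factors through the hierarchy (since $\ep_t[S^2]$ depends on $\ep_t[SL]$, $\ep_t[SL^{-\lambda}]$, $\ep_t[L^2]$, $\ep_t[L^{-2\lambda}]$, $\ep_t[L^{-\lambda+1}]$ and $\ep_t[L^{-\lambda}]$); this is handled by applying the $L^q$ formula with $q\in\{1,2,-\lambda,-\lambda+1,-2\lambda\}$ and the $SL^q$ formula with $q\in\{1,-\lambda\}$, which is exactly what the definitions of $\tilde S_{I}$--$\tilde S_{III}$, $\widetilde{SL}_{I}$--$\widetilde{SL}_{III}$ and $\widetilde{S^2}_{I}$--$\widetilde{S^2}_{VI}$ encode. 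Plugging (\ref{eq:2-6}) and (\ref{eq:2-7}) into $J(t,\mbd X^*(t);u^*)$ and reading off $V$ then completes the proof.
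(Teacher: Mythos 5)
Your proposal is correct and follows essentially the same route as the paper: substitute the equilibrium feedback control into the surplus dynamics, derive the closed hierarchy of SDEs for $L^{q}$, $SL^{q}$ and $S^{2}$ via It\^o's formula, take conditional expectations to obtain the linear first-order ODEs (\ref{eq:2-1})--(\ref{eq:2-5}), solve them in order by integrating factors to get (\ref{eq:2-6}) and (\ref{eq:2-7}), and substitute into the definition (\ref{eq:2-2-0-1}) of $V$ via the cost functional (\ref{eq:2-1-2}). The exponent bookkeeping you describe matches the paper's computation exactly.
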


\section{Numerical Examples}

In this section, we illustrate our results by some numerical examples.
The comparisons between the equilibrium strategy and the pre-committed
strategy, and between the equilibrium value function and the pre-committed
optimal value function, are provided in \citet{wwyy12}. Recall that
we get the same result with \citet{wwyy12} in a special case. Thus
in this paper we do not make the comparison between our results and
the pre-committed strategy. We are concerned with the effect of the
state-dependent risk aversion on the equilibrium strategy.

All the parameters are listed blew:
\begin{align*}
T & =10,\q r=0.1,\q\mu=0.6,\q\sigma=0.3,\\
\alpha & =0.1,\q\beta=0.2,\q\rho=0.6.\q\lambda=0.5.
\end{align*}
In the following figures, three initial time points are chosen, i.e.,
$t=0,5,8$, and the surplus and the liability are 5 and 3, respectively. 

In Figure \ref{fig:5-1}, we plot the equilibrium strategy as well
as the equilibrium value function versus $\omega_{1}$ for different
$\omega_{2}$. It illustrates that the equilibrium strategy increases
as $\omega_{1}$ increases. This is reasonable, since the risk aversion
decreases as $\omega_{1}$ increases and the investor tends to invest
more into the stock market. The equilibrium value function is a decreasing
function of $\omega_{1}$. This implies that the investor can get
higher return by invests boldly (the risk aversion decreases as $\omega_{1}$
increases).

Figure \ref{fig:5-2} illustrates the equilibrium strategy and the
equilibrium value function versus $\omega_{2}$ for different $\omega_{1}$.
The curves of the equilibrium strategy and the equilibrium value function
show the same feature as in Figure \ref{fig:5-1}. 

\begin{figure}[b]
\subfloat[$t=0$]{\includegraphics[scale=0.8]{lam0\lyxdot 5-1-0}\includegraphics[scale=0.8]{V-lam0\lyxdot 5-1-0}

}

\subfloat[$t=5$]{\includegraphics[scale=0.8]{lam0\lyxdot 5-1-5}\includegraphics[scale=0.8]{V-lam0\lyxdot 5-1-5}

}

\subfloat[$t=8$]{\includegraphics[scale=0.8]{lam0\lyxdot 5-1-8}\includegraphics[scale=0.8]{V-lam0\lyxdot 5-1-8}}

\caption{\label{fig:5-1}Equilibrium strategy and equilibrium value function
versus $\omega_{1}$}
\end{figure}

\begin{figure}[b]
\subfloat[$t=0$]{\includegraphics[scale=0.8]{lam0\lyxdot 5-2-0}\includegraphics[scale=0.8]{V-lam0\lyxdot 5-2-0}}

\subfloat[$t=5$]{\includegraphics[scale=0.8]{lam0\lyxdot 5-2-5}\includegraphics[scale=0.8]{V-lam0\lyxdot 5-2-5}}

\subfloat[$t=8$]{\includegraphics[scale=0.8]{lam0\lyxdot 5-2-8}\includegraphics[scale=0.8]{V-lam0\lyxdot 5-2-8}}

\caption{\label{fig:5-2}Equilibrium strategy and equilibrium value function
versus $\omega_{2}$}
\end{figure}

\appendix

\section*{Appendix }

\setcounter{section}{1} \setcounter{equation}{0} 

Consider a special case of \citet{wwyy12} with one regime, one bound
and one risk asset. The equilibrium strategy is given by
\begin{eqnarray*}
\hat{u}(t,s,l) & = & \frac{\beta(t)\rho(t)}{\sigma(t)}\left[1-e^{-\int_{t}^{T}r(y)\dt y}b(t)\right]l+\frac{\theta(t)}{\gamma\sigma^{2}(t)}e^{-\int_{t}^{T}r(y)\dt y},
\end{eqnarray*}
where $b(t)$ satisfies the linear of ODE:
\begin{align*}
\begin{cases}
\dot{b}(t) & =-\left[\alpha(t)-\frac{\theta(t)\beta(t)\rho(t)}{\sigma(t)}\right]b(t)-\left[\eta(t)+\frac{\theta(t)\beta(t)\rho(t)}{\sigma(t)}\right]e^{\int_{t}^{T}r(y)\dt y},\\
b(T) & =0.
\end{cases}
\end{align*}
The solution to the above ODE is given by
\begin{eqnarray*}
b(t) & = & e^{\int_{t}^{T}\left[\alpha(y)-\frac{\theta(y)\beta(y)\rho(y)}{\sigma(y)}\right]\dt y}\int_{t}^{T}e^{-\int_{z}^{T}\left[\alpha(y)-\frac{\theta(y)\beta(y)\rho(y)}{\sigma(y)}\right]\dt y}\left[\eta(z)+\frac{\theta(z)\beta(z)\rho(z)}{\sigma(z)}\right]e^{\int_{z}^{T}r(y)\dt y}\dt z\\
 & = & e^{\int_{t}^{T}r(y)\dt y}\int_{t}^{T}\left[\eta(z)+\frac{\theta(z)\beta(z)\rho(z)}{\sigma(z)}\right]e^{\int_{z}^{t}\left[\eta(y)+\frac{\theta(y)\beta(y)\rho(y)}{\sigma(y)}\right]\dt y}\dt z.
\end{eqnarray*}
Now consider the special case of our model with $\omega_{1}=0$. Note
that the risk aversion in \citet{wwyy12} is $\gamma=\frac{1}{\omega_{2}}$.
Thus we have
\begin{align*}
f_{1}(t) & =0,\\
f_{3}(t) & =\frac{\rho(t)\beta(t)}{\sigma(t)}\left[1-\int_{t}^{T}\left[\eta(z)+\frac{\theta(z)\rho(z)\beta(z)}{\sigma(z)}\right]e^{\int_{z}^{t}\left[\eta(y)+\frac{\theta(y)\rho(y)\beta(y)}{\sigma(y)}\right]\dt y}\dt z\right]\\
 & =\frac{\rho(t)\beta(t)}{\sigma(t)}\left[1-e^{-\int_{t}^{T}r(y)\dt y}b(t)\right],\\
f_{4}(t) & =\frac{\theta(t)}{\gamma\sigma^{2}(t)}e^{-\int_{t}^{T}r(y)\dt y}.
\end{align*}
Thus, in this special case, we get the same equilibrium strategy.

\section*{Acknowledgments}

We would like to thank the referee for valuable comments and suggestions.
This work was supported by National Natural Science Foundation of
China (10971068), Doctoral Program Foundation of the Ministry of Education
of China (20110076110004), Program for New Century Excellent Talents
in University (NCET-09-0356) and the Fundamental Research Funds for
the Central Universities.

\bibliographystyle{plainnat}
\bibliography{ZWW-1206}

\end{document}